\newtheorem{definition}{Definition}
\newtheorem{assumption}{Assumption}
\newtheorem{proposition}{Proposition}
\newtheorem{lemma}{Lemma}
\newtheorem{remark}{Remark}
\newtheorem{example}{Example}
\newcommand{\prt}[1]{\left(#1\right)}
\newcommand{\brk}[1]{\left[#1\right]}
\newcommand{\abs}[1]{\left|#1\right|}
\newcommand{\norm}[1]{\abs{\abs{#1}}}
\newcommand{\Ep}[1]{\mathbb{E}\left[#1\right]}
\newcommand{\R}{\mathbb R}
\newcommand{\I}{\mathcal{I}}
\title{On Lyapunov functions for open Hegselmann-Krause dynamics}
\author{Renato Vizuete\footnote{ICTEAM institute, UCLouvain, B-1348, Louvain-la-Neuve, Belgium \linebreak (e-mail: renato.vizueteharo@uclouvain.be)}\qquad Paolo Frasca\footnote{Univ.\ Grenoble Alpes, CNRS, Inria, Grenoble INP, GIPSA-lab, F-38000 Grenoble, France (e-mail: paolo.frasca@gipsa-lab.fr)} \qquad Elena Panteley\footnote{Universit\'{e} Paris-Saclay, CNRS, CentraleSup\'{e}lec, Laboratoire des signaux et syst\`{e}mes, 91190, Gif-sur-Yvette, France (e-mail: elena.panteley@l2s.centralesupelec.fr)}}
\date{}
\begin{document}
\maketitle

\begin{abstract}               
In this paper, we provide a formulation of an open Hegselmann-Krause (HK) dynamics where agents can join and leave the system during the interactions. We consider a stochastic framework where the time instants corresponding to arrivals and departures are determined by homogeneous Poisson processes. Then, we provide a survey of Lyapunov functions based on global and local disagreement, whose asymptotic behavior can be used to measure the impact of arrivals and departures. After proving analytical results on these Lyapunov functions in the open system, we illustrate them through numerical simulations in two scenarios characterized by a different number of expected agents.
\end{abstract}

\section{Introduction}

Hegselmann-Krause (HK) model is one of the most important opinion dynamics characterized by bounded confidence interactions \cite{hegselmann2002opinion}. For a population of $n$ agents indexed in a set $\I=\{ 1,\ldots,n\}$, each agent $i$ in the network holds a real-valued opinion $x_i\in\R$ and interacts with other agents only if the difference between their opinions remain inside the confidence interval of the agent, determined by a specific threshold:
\begin{equation}\label{eq:Krause_dynamics}
\dot x_i(t)=\sum_{j:\vert x_i(t)-x_j(t)\vert<1} (x_j(t)-x_i(t)), \quad \text{for all  } i\in\I .
\end{equation}
Since its original formulation, several extensions of this dynamics have been considered to reproduce additional characteristics of more realistic social interactions. For instance, multidimensional dynamics has been formulated in   \cite{nedic2012multi},  heterogeneous threshold were studied in \cite{mirtabatabaei2011opinion} and noisy states have been considered in \cite{pineda2013noisy}. However, changes in the set of agents have not yet been incorporated in the study of the HK dynamics. 

In many social interactions, the group of individuals does not remain constant since new agents may join the system and others leave. This happens specially in interactions taking place over online platforms (Facebook, Twitter, etc.) where agents can connect and disconnect in an easy manner. A similar behavior has been observed at the level of communications networks for the control of connected vehicles \cite{bechihi2023resource}. The importance of these phenomena has been highlighted in \cite{ravazzi2021learning,proskurnikov2018tutorial} as an important feature that must be taken into account in the analysis of dynamic social interactions. 

A multi-agent system characterized by a time-varying set of agents is called \emph{open multi-agent system} (OMAS), where replacements, arrivals and departures can occur~\cite{hendrickx2016open,hendrickx2017open}. Depending on the type of system, different approaches may be used for its analysis, including time-invariant finite superset \cite{vizuete2020influence}, multi-mode multi-dimensional systems \cite{xue2022stability}, or continuum of agents \cite{blondel2010continuous}, etc.

One of the most important challenges in the study of OMAS is the definition of the graph topology during arrivals and departures of agents. In this sense, the HK model avoids any ambiguity since the network topology is automatically defined based on the states of the agents. However, even if the graph topology is well defined, it remains another important question: How to analyze OMAS when the size of the system can change? To the best of our knowledge, there are no results to handle this type of systems. Due to the time-varying dimension, trajectories cannot be well defined due to the lost of the information (departures) and new information coming into the system (arrivals). Several approaches based on the use of scalar functions independent of the dimension of the system have been considered for the analysis of OMAS \cite{monnoyer2020open,vizuete2022contributions}. For relatively simple systems, the choice of this scalar function is evident, as in the case of consensus where the variance is a good measure of disagreement. Nevertheless, the choice is not straightforward for more complex dynamics. For general dynamical systems, Lyapunov functions appear as a potential tool to be used in OMAS. The problem then becomes to measure the changes on properties of the Lyapunov functions due to the arrivals and departures. In the case of the HK dynamics, several Lyapunov functions have been used for the analysis of stability and some of them have completely different behaviors, such that it is possible that only few of them are suitable for the analysis in the open scenario.

Preliminary works in an open HK model have only studied the particular case of replacements where the opinion of agents change abruptly to mimic dynamic groups \cite{torok2013opinions,grauwin2012opinion}. Since the problem is highly complex and asymptotic properties cannot be ensured for the trajectories, most of the previous works rely on simulations of order parameters associated with the dynamics to evaluate the performance under replacements. Regarding stability, the authors in \cite{blondel2010continuous} have considered the addition of new agents only as perturbations to analyze the stability of existing clusters (connected components). 

In this paper, we present a formulation of an open HK dynamics where agents can join and leave the system according to homogeneous Poisson processes, such that the size of the system is time-varying. Based on the characteristics of the HK dynamics, we propose Lyapunov functions for the analysis of the open system, that are formulated as alternative versions of well-known Lyapunov functions for the closed system (no arrivals and departures). We perform an analysis of these functions focused on three important properties: asymptotic value, continuity in time, and monotonicity in time. Finally, relying on numerical simulations we discuss the advantages and drawbacks of the Lyapunov functions in closed and open scenarios.

\paragraph{Outline.} In Section~\ref{sect:closed} we define the relevant Lyapunov functions and discuss their properties, while distinguishing between functions that measure global or local disagreement. In Section~\ref{sect:open} we extend the discussion to Open HK dynamics, by simulations. The final Section~\ref{sect:outro} comments on the opportunities for future research.

\section{Lyapunov functions in closed HK dynamics}
\label{sect:closed}
The differential equation \eqref{eq:Krause_dynamics} usually has no differentiable solutions (classical solutions) since the right-hand side of the equation can be discontinuous when the interaction topology changes, which can prevent $x$ from being differentiable. For this reason, we consider Carathéodory solutions of \eqref{eq:Krause_dynamics}, which correspond to solutions of the integral equation:
\begin{equation}\label{eq:Caratheodory}
    x_i(t)=x_i(t_0)+\int_{t_0}^t \sum_{j:\vert x_i(\tau)-x_j(\tau)\vert<1} (x_j(\tau)-x_i(\tau))  \; d\tau,
\end{equation}
for all $i\in\I$. The set of equilibria of \eqref{eq:Caratheodory} is known to be
 \begin{equation}\label{eq:equilibria_set}
 F=\{x\in \mathbb{R}^n: \text{for every } (i,j)\in \I\times\I,
 \text{ either }x_i=x_j\text{ or }\vert x_i-x_j\vert\ge 1\}
 \end{equation}
 and the following convergence result is available.

\begin{lemma}[\cite{blondel2010continuous}]
For almost every initial condition, there exists a solution for \eqref{eq:Caratheodory} that converges to a limit $x^*\in F$.
\end{lemma}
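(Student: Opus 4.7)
The plan is to combine a Lyapunov-type estimate with a graph-stabilization argument, and to use the ``almost every initial condition'' clause to exclude degenerate boundary configurations. A preliminary observation is that the coupling weights in \eqref{eq:Krause_dynamics} are symmetric, which implies that the average $\bar x := \frac{1}{n}\sum_{i=1}^{n} x_i$ is conserved and that $\min_i x_i$ (resp.\ $\max_i x_i$) is non-decreasing (resp.\ non-increasing) along any Carathéodory solution. This a priori bound, combined with the boundedness of the right-hand side, yields global existence of Carathéodory solutions on $[t_0,\infty)$ via standard arguments for differential equations with piecewise-continuous right-hand sides.

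Next, I would consider the variance $V(x)=\sum_i (x_i-\bar x)^2$. Using the symmetry of the interaction weights and the conservation of $\bar x$, a direct computation gives, for almost every $t$,
\[
\dot V(x(t)) \;=\; -2\sum_{\{i,j\}:\, |x_i(t)-x_j(t)|<1}\bigl(x_i(t)-x_j(t)\bigr)^2 \;\leq\; 0.
\]
Hence $V\circ x$ is non-increasing and bounded below by $0$, so it converges; in particular, the squared disagreement across the active pairs is integrable in time, which already says that interacting agents must be drawn together as $t\to\infty$.

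To promote this into convergence of $x(t)$ itself to a point of $F$, I would invoke the ``almost every initial condition'' clause as follows. For all initial data outside a measure-zero set, the interaction graph $G(t)=(\I,\{\{i,j\}:|x_i(t)-x_j(t)|<1\})$ undergoes only finitely many switches on every bounded interval and eventually stabilizes to some graph $G^\star$. Once $G(t)\equiv G^\star$, the dynamics on each connected component of $G^\star$ reduces to a linear consensus driven by a fixed graph Laplacian, and hence converges exponentially to the (conserved) component mean; two agents in different components of $G^\star$ must satisfy $|x_i(t)-x_j(t)|\geq 1$ by definition. Taking $t\to\infty$ then yields a limit $x^\star\in F$.

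The technical heart of the argument --- and the main obstacle --- is the graph-stabilization step. One has to rule out initial conditions for which some pair oscillates across the threshold $|x_i-x_j|=1$ infinitely often, or spends a set of positive time at exactly $|x_i-x_j|=1$; these pathologies are precisely what is hidden behind the ``almost every'' qualifier. A natural route is to show that the exceptional initial data lie in a countable union of analytic subsets of $\R^n$ of positive codimension, hence have Lebesgue measure zero, while on the complement each topology switch is isolated and strictly alters $V$, so the finiteness of the set of graphs on $\I$ forces the topology to stabilize after finitely many switches.
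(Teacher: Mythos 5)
Note first that the paper does not prove this lemma at all: it is imported verbatim from \cite{blondel2010continuous}, so the benchmark is that reference's proof, whose structure is: (i) show that for almost every initial condition a \emph{proper} Carath\'eodory solution of \eqref{eq:Caratheodory} exists (one for which the set of times at which some pair sits exactly at distance $1$ has measure zero), and (ii) show that every such solution converges to a point of $F$. Your first gap is at step (i). The right-hand side of \eqref{eq:Krause_dynamics} is discontinuous in the \emph{state}, not in time, so the classical Carath\'eodory existence theory (measurable in $t$, continuous in $x$) that you invoke as ``standard arguments for piecewise-continuous right-hand sides'' does not apply; for state-discontinuous systems, Carath\'eodory solutions can genuinely fail to exist from some initial conditions (a trajectory can reach a switching surface $\vert x_i-x_j\vert=1$ whose only continuation is a sliding mode, i.e.\ a Filippov but not a Carath\'eodory solution). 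Proving almost-everywhere existence is precisely where the ``almost every'' clause of the lemma is spent in \cite{blondel2010continuous}, and it is the hard part of the statement; your opening step assumes it.

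Your second gap is the graph-stabilization step, which you correctly identify as the heart of your argument but do not close, and the sketch you offer for it fails on two counts. First, ``finiteness of the set of graphs on $\I$ forces the topology to stabilize after finitely many switches'' is a non sequitur: finitely many possible graphs is compatible with infinitely many switches because graphs can recur. Only the gaps between \emph{consecutive} (order-sorted) opinions are persistent once they reach $1$ (\cite{hendrickx2008graphs}, used in this paper's proof of Proposition~4); an edge between a non-consecutive pair can be cut and later restored while the intermediate chain keeps the cluster connected. Relatedly, your claim that each switch ``strictly alters $V$'' is false: the variance is continuous across switches — only its derivative jumps — so switches cannot be counted this way. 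Second, the measure-zero argument is run on the wrong space: the pathological set you must excise is defined through the flow, which across switching surfaces is neither smooth nor even guaranteed single-valued, so positive-codimension statements about events along trajectories cannot be pulled back to the initial-condition space without regularity the HK flow lacks. The reference sidesteps all of this: its convergence argument for proper solutions combines the Lyapunov decay you derive with order preservation and the persistence of consecutive gaps $\ge 1$, and never needs the interaction graph to become eventually constant. Conditional on your unproven stabilization claim, the remainder (linear consensus on each component of $G^\star$, inter-component distances $\ge 1$ in the limit, hence $x^\star\in F$) is sound, but as written the proposal leaves both the existence half and the convergence half of the lemma open.
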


In this work, we consider that all the initial states of the agents $x_i(t_0)$ belong to an interval $[a,b]$: due to the characteristics of the HK dynamics, they will remain inside this interval during the evolution of the dynamics.

Due to the existence of temporary clusters in the HK dynamics, we will analyze two types of Lyapunov functions based on disagreement of the states: we refer to \emph{global disagreement} when the states of the agents are compared with all the other agents in the system independently of the clusters, and \emph{local disagreement} when the state of the agents are compared only with the states of the agents in the current clusters. 

\subsection{Global disagreement functions}

In this subsection, we consider Lyapunov functions that measure the disagreement among all the agents in the system. A natural candidate to measure the global disagreement is the variance, defined as:
\begin{equation}\label{eq:function_U}
U_0(x):=\dfrac{1}{n}\sum_{i=1}^n (x_i-\bar x)^2,
\end{equation}
where $\bar x=\frac{1}{n}\sum_{i=1}^n {x_i}$ is the average value. This function has been used to prove stability of the equilibria set \eqref{eq:equilibria_set} in \cite{blondel2010continuous}.

Similarly to the variance, we can also define the disagreement function between all the agents of the system through the function:
\begin{equation}\label{eq:function_V}
V_0(x):=\dfrac{1}{n^2}\sum_{i,j=1}^n(x_i-x_j)^2.   
\end{equation}
Additionally, we introduce another classical Lyapunov function, which has been used in  \cite{ceragioli2012continuous}:
\begin{equation}\label{eq:function_T}
T_0(x):=\dfrac{1}{2n}\sum_{i=1}^n x_i^2=\frac{1}{2n}\norm{x}^2,    
\end{equation}
where $\norm{\cdot}$ denotes the Euclidean norm.
Even if it might not be apparent that $T_0(x)$ measures disagreement, it is formulated in terms of global information of the system and, as the following simple result shows, its behavior is equivalent to the functions \eqref{eq:function_U} and \eqref{eq:function_V}.

\begin{proposition}[Relations]\label{prop:equivalence_functions}
For the Lyapunov functions \eqref{eq:function_U}, \eqref{eq:function_V} and \eqref{eq:function_T}, we have:
$$
V_0(x)=2U_0(x)=4T_0(x)-2\bar x^2.
$$
\end{proposition}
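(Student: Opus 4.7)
The plan is to reduce all three functionals to a common expression involving only $\sum_{i=1}^n x_i^2$ and $\bar{x}$, and then read off both equalities at once. No step is truly an obstacle: this is the textbook identity relating variance to mean squared pairwise differences, and the only thing to watch carefully is the normalization ($1/n$ in $U_0$ versus $1/n^2$ in $V_0$ versus $1/(2n)$ in $T_0$).

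First, I would expand the pairwise sum in $V_0$ as
\begin{equation*}
\sum_{i,j=1}^n (x_i-x_j)^2 = 2n\sum_{i=1}^n x_i^2 - 2\prt{\sum_{i=1}^n x_i}^2,
\end{equation*}
where the cross term is handled via $\sum_{i,j} x_i x_j = \prt{\sum_i x_i}^2$. Dividing by $n^2$ and substituting $\sum_i x_i = n\bar{x}$ yields
\begin{equation*}
V_0(x) = \frac{2}{n}\sum_{i=1}^n x_i^2 - 2\bar{x}^2,
\end{equation*}
and recognising $\frac{1}{n}\sum_i x_i^2 = 2T_0(x)$ from the definition of $T_0$ immediately gives $V_0(x) = 4T_0(x) - 2\bar{x}^2$, which is the outer equality.

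Second, I would expand $(x_i - \bar{x})^2$ inside the definition of $U_0$ and use $\sum_i x_i = n\bar{x}$ to get $nU_0(x) = \sum_i x_i^2 - n\bar{x}^2$, hence
\begin{equation*}
2U_0(x) = \frac{2}{n}\sum_{i=1}^n x_i^2 - 2\bar{x}^2.
\end{equation*}
This matches the expression derived above for $V_0(x)$, yielding the middle equality $V_0 = 2U_0$ and closing the chain $V_0(x) = 2U_0(x) = 4T_0(x) - 2\bar{x}^2$.
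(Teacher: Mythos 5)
Your proof is correct and follows essentially the same route as the paper: both arguments expand the squares directly and use $\sum_i x_i = n\bar{x}$, the only cosmetic difference being that you reduce $V_0$, $2U_0$, and $4T_0-2\bar{x}^2$ to the common normal form $\tfrac{2}{n}\sum_i x_i^2 - 2\bar{x}^2$, while the paper chains the equalities pairwise ($U_0 = 2T_0 - \bar{x}^2$ first, then $V_0 = 2U_0$).
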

\begin{proof}
 First, we find an equivalence between the functions $U_0(x)$ and $T_0(x)$:
\begin{align*}
    U_0(x)
    &=\dfrac{1}{n}\sum_{i=1}^n(x_i^2-2x_i\bar x+\bar x^2)\\
    &=\dfrac{1}{n}\left(\sum_{i=1}^nx_i^2-2\bar x\sum_{i=1}^n x_i+\sum_{i=1}^n\bar x^2\right)\\
    &=\dfrac{1}{n}\left(\sum_{i=1}^nx_i^2-2(n\bar x)\bar x+n\bar x^2\right)\\
    &=2T_0(x)-\bar x^2.
\end{align*}
Next, we find an equivalence between the functions $V_0(x)$ and $U_0(x)$:
\begin{align*}
    V_0(x)
    &=\dfrac{1}{n^2}\sum_{i,j=1}^n(x_i^2-2x_ix_j+x_j^2)\\
    &=\dfrac{1}{n^2}\left(n\sum_{i=1}^nx_i^2-2\left(\sum_{i=1}^n x_i\right)\left(\sum_{j=1}^n x_j\right)+n\sum_{j=1}^nx_j^2\right)\\
    &=\dfrac{1}{n^2}\left(2n\sum_{i=1}^n x_i^2-2n^2\bar x^2\right)\\
    &=2U_0(x), 
\end{align*}
which yields the desired result. 
\end{proof}
The non-increase in time of $U_0(x(t))$ and $T_0(x(t))$ have been proved in \cite{blondel2010continuous} and \cite{ceragioli2012continuous} respectively.

In addition to the previous functions, we also introduce the following Lyapunov function that has been proposed in \cite{piccoli2021generalized} for a generalized model of the HK dynamics:
$$
W_0(x):=\frac{1}{n^2}\prt{\sum_{i,j:\abs{x_i-x_j}<1}\prt{x_i-x_j}^2+\sum_{i,j:\abs{x_i-x_j}\ge 1}1}.
$$
This function is composed by two sums, where the first sum measures the disagreement among the agents that interact, and the second sum counts the agents that do not interact. The first term is thus a measure of ``local'' disagreement: as such, it will be studied in the next section (where it will be denoted as $V$). The second term, however, renders it dependent on  global information and thus a measure of global disagreement. Function $W_0(x(t))$ is continuous but, unlike the previous functions in this section, $W_0(x(t))$ is not differentiable along the trajectories of \eqref{eq:Krause_dynamics}. Further properties of $W_0$ can be deduce from the properties of $V$ studied below.

\subsection{Local disagreement functions}

One of the main drawbacks on the use of Lyapunov functions based on global disagreement is that even if the functions are non-increasing, their asymptotic values are not zero when the system presents several clusters\footnote{In this paper, we call clusters the connected components of the graph associated to the HK dynamics.}. We can see this behavior by analyzing the function $U_0(x(t))$ whose asymptotic value is given by:
$$
\lim_{t\to\infty} U_0(x(t))=\dfrac{1}{n}\sum_{i=1}^K n_i\left(k_i-\dfrac{1}{n}\sum_{j=1}^K n_jk_j \right)^2,
$$
where $K$ is the final number of clusters, $k_i$ is the final value of the agents in cluster $i$ and $n_i$ is the number of agents in the cluster $i$. In fact, we can observe that $\lim_{t\to\infty}U_0(x(t))=0$ only when $K=1$, which means consensus of all the agents of the system, a not so common situation in the HK dynamics. Clearly, from Proposition~\ref{prop:equivalence_functions}, we conclude that the other two functions $V_0$ and $T_0$, do not converge to zero. Similarly, if at the end, we have agents in different clusters, the sum $\sum_{i,j:\abs{x_i-x_j}\ge 1}1\neq 0$ in the function $W_0(x(t))$.

In the context of OMAS, we would like to have functions converging to zero at least for a closed dynamics, such that any deviation from the zero value is due only to the impact of arrivals and departures. For this reason, we bring forward potential Lyapunov functions based on local disagreement.

\subsubsection{Function $V(x)$}
First, let us consider a variant of the function $V_0(x)$ by restricting the disagreement function $V_0(x)$ to the set $D=\{i,j:\vert x_i-x_j\vert<1\}$. Then, we define the function:
\begin{equation}\label{eq:function_V_open}
V(x):=\frac{1}{n^2}\sum_{i,j:\vert x_i-x_j\vert<1}(x_i-x_j)^2.    
\end{equation}
Clearly $V(x)\ge 0$ and it can be shown \cite[Proposition~1]{blondel2010continuous} that along the trajectories $x(t)$ of \eqref{eq:Krause_dynamics}
$$
V(x(t))=-\frac{1}{n}\dfrac{d}{dt}U_0(x(t)).
$$
Since from Theorem 2 in \cite{blondel2010continuous}, every solution $x(t)$ converges to a limit $x^*\in F$, then the function $V(x(t))$ converges to 0.

The function $V(x(t))$ can also be intuitively expressed as 
$$V(x(t))=\frac{1}{n}x(t)^TL_{t}x(t),
$$
where $L_{t}$ is the Laplacian of the interaction graph at time $t$, and its $k$-th derivative (on its domain of definition) can be obtained to be $$\frac{d^{(k)}}{dt^{(k)}}V(x(t))=\frac{(-2)^{k}}{n}x(t)^TL_{ t}^{k+1}x(t).$$

We now consider the monotonicity properties of $V(x(t))$.
\begin{proposition}
With the exception of a countable set of times, $V(x(t))$ is non-increasing.
\end{proposition}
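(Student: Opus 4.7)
The plan is to separate the time axis into the ``regular'' part, on which the interaction graph of the HK dynamics is locally constant, and the ``switching'' part $\Sigma$, on which the graph topology changes. I would show that $\Sigma$ is at most countable, and then verify on $\mathbb{R}_{\ge 0}\setminus\Sigma$ that $\dot V(x(t))\le 0$ using the derivative formula already quoted in the paper.

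For the derivative step, I would proceed as follows. On any maximal open interval of $\mathbb{R}_{\ge 0}\setminus\Sigma$, the Laplacian $L_t\equiv L$ is constant, so the Carathéodory solution satisfies $\dot x=-Lx$ classically there and $V(x(t))=\tfrac{1}{n}x(t)^{T}Lx(t)$ is smooth. Invoking the formula $\tfrac{d}{dt}V(x(t))=-\tfrac{2}{n}x(t)^{T}L_{t}^{2}x(t)$ stated just above the proposition and the fact that $L_t$ is symmetric positive semidefinite (as a graph Laplacian), $L_t^2$ is also positive semidefinite, yielding $\dot V(x(t))\le 0$. Thus $V$ is non-increasing on each such interval, and since the complement of $\Sigma$ is a countable disjoint union of such intervals, monotonicity holds off $\Sigma$.

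For the countability of $\Sigma$, I would argue that $\Sigma$ is contained in
\[
\bigcup_{i,j\in\mathcal I}\{t\ge t_0:\ |x_i(t)-x_j(t)|=1\}.
\]
On each maximal interval between switches, $x(t)$ is a real-analytic function (being the solution of a linear autonomous ODE), so each map $t\mapsto|x_i(t)-x_j(t)|^2-1$ is real-analytic; it either vanishes on a discrete set or is identically zero on that interval. The latter case corresponds to a pair permanently on the threshold, a non-generic configuration that can be excluded for the ``almost every initial condition'' assumption already in force. Hence the switching times in any compact time window are finite, and $\Sigma$ is at most countable.

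The main obstacle I anticipate is the careful handling of $\Sigma$ itself, in particular ruling out accumulation of switching times (Zeno phenomena) and the degenerate case $|x_i(t)-x_j(t)|\equiv 1$ on an interval. Both issues are essentially ``almost everywhere'' concerns consistent with the Carathéodory/Blondel--Hendrickx--Tsitsiklis framework adopted in the paper, but they are the only place where the argument is not purely algebraic. Once this is dispatched, the monotonicity follows immediately from the positive semidefiniteness of $L_t^2$.
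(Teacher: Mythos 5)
Your proof is correct, but it takes a genuinely different route from the paper's. The paper argues structurally: between topology changes the clusters (connected components) are fixed, each cluster evolves as a self-contained closed HK system, and $V$ is rewritten as $\frac{1}{n^2}\sum_{\ell} V_{0_\ell}$, a sum of per-cluster disagreement functions whose non-increase is already known from the closed-system results (Proposition~1 and \cite{blondel2010continuous}); no derivative is computed. You instead work directly with the quadratic form $V(x(t))\propto x(t)^T L_t x(t)$ and observe that on each inter-switch interval $\dot V = -\tfrac{2}{n}x^T L^2 x = -\tfrac{2}{n}\norm{Lx}^2 \le 0$ by positive semidefiniteness of the symmetric Laplacian. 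Your route is arguably more robust on one point: the paper's identity $V=\frac{1}{n^2}\sum_\ell V_{0_\ell}$ silently assumes every pair inside a cluster satisfies $\abs{x_i-x_j}<1$, i.e.\ that each connected component is a clique in opinion space, which fails when a cluster spans more than unit length (pairs at distance $\ge 1$ within the same component are counted by $V_{0_\ell}$ but not by $V$); the quadratic form $x^TL_tx$ sums exactly over interacting pairs and avoids this. You are also more explicit than the paper about the exceptional set: the paper simply speaks of ``between the times at which a topology change occurs,'' whereas you justify countability via piecewise analyticity of the solution of the frozen linear ODE and isolated zeros of $t\mapsto\abs{x_i(t)-x_j(t)}^2-1$. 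Do note the mild circularity you yourself flag: defining the maximal inter-switch intervals on which analyticity is invoked presupposes that switching times do not accumulate (no Zeno behavior), so your countability step, like the paper's implicit assumption, ultimately rests on the generic-initial-condition framework of the Carath\'eodory solutions rather than on a self-contained argument; within that shared caveat, both proofs are sound, and both reduce to the same underlying fact that between switches the dynamics is a fixed linear consensus system.
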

\begin{proof}
Between the times at which a topology change occurs, we have a constant number of clusters $m\le n$. For each cluster we can define a function:
$$
V_{0_\ell}(x)={\sum_{j,k=1}^{c_\ell} (x_j-x_k)^2,}\;\; \ell\in\{1,\ldots,m\},
$$
where $c_\ell$ is the number of agents in cluster $\ell$. Then, the function $V(x(t))$ can be expressed as:
$$
V(x(t))=\frac{1}{n^2}\sum_{\ell=1}^m V_{0_\ell}(x(t))
$$
Since all functions $V_{0_\ell}(x(t))$ are non-increasing (being, up to constants, disagreement functions \eqref{eq:function_V} restricted to the cluster), the proof is completed. 
\end{proof}

Unlike function $V_0(x(t))$, function $V(x(t))$ can be discontinuous and also not monotonic in time, because of the changes in the topology. Indeed, this function can increase during the evolution of the HK dynamics due to the switching topology, as the following example shows. 
\begin{example}
Let us consider three agents $x_1$, $x_2$, $x_3$ such that $x_1<x_2<x_3$, $\vert x_1-x_2\vert<1$, $\vert x_3-x_2\vert<1$ and $\vert x_1-x_3\vert >1$. In this case, agent $x_1$ is interacting only with agent $x_2$, while agent $x_2$ is interacting also with agent $x_3$. Next, we analyze the event characterized by $\vert x_3-x_1\vert=1$ and we denote by $V(x(t^-))$ the value of the function $V(x(t))$ before this event. In this case, the function $V(x(t^-))$ is given by:
$$
V(x(t^-))=(x_1-x_2)^2+(x_2-x_1)^2+(x_2-x_3)^2+(x_3-x_2)^2.
$$
However, after the event, when agent $x_3$ is inside the range of agent $x_1$ and all the three agents are interacting, the function $V(x(t^+))$ is:
\begin{align*}
    V(x(t^+))&=(x_1-x_2)^2+(x_1-x_3)^2+(x_2-x_1)^2\\
    &\quad +(x_2-x_3)^2+(x_3-x_1)^2+(x_3-x_2)^2\\
    &=V(x(t^-))+(x_1-x_3)^2+(x_3-x_1)^2\\
    &=V(x(t^-))+2.
\end{align*}
\end{example}
More generally, we have that when two agents begin to interact, a new edge is added to the interaction graph, and when two agents stop interacting, an edge is removed from the interaction graph. Therefore, the changes of $V(x(t))$ during switching events can be expressed as:
$$
V(x(t^+))=V(x(t^-))+2(e_a-e_r),
$$
where $e_a$ and $e_r$ are the number of edges added and removed respectively during the switching event.

\subsubsection{Function $U(x)$}
Following the approach in \cite{tumash2019synchronization}, we consider the spectral decomposition of the Laplacian matrix:
$$
L=V\Lambda V^T.
$$
The matrix $V$ can be decomposed in two parts as $V=[V_0,V_-]$ where the $n\times m$ and $n\times(n-m)$ matrices $V_0$ and $V_-$ respectively, are constructed from eigenvectors corresponding to zero and negative eigenvalues respectively. Let us define the error as:
$$
\mathrm{e}(x):=V_-V_-^Tx,
$$
and the function:
$$
U(x):=\dfrac{1}{n}\mathrm{e}(x)^T\mathrm{e}(x).
$$

\begin{proposition}

The function $U(x)$ is such that
\begin{equation}\label{eq:alternative_U}
U(x)=\dfrac{1}{n}\sum_{i=1}^n (x_i-\bar x_i)^2,   
\end{equation}
where $\bar x_i$ is the average of the states of the agents in the cluster to which $x_i$ belongs. 
\end{proposition}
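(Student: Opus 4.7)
The plan is to exploit orthogonality of the eigenbasis and the well-known structure of the null space of a graph Laplacian. Since $V=[V_0,V_-]$ is orthogonal, we have $V_0V_0^T+V_-V_-^T=I$, so that
$$\mathrm{e}(x)=V_-V_-^Tx=x-V_0V_0^Tx.$$
Hence the statement reduces to showing that $(V_0V_0^Tx)_i=\bar x_i$, i.e.\ that the orthogonal projection onto $\ker L$ sends $x$ to the vector of cluster averages.

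Next I would identify $\ker L$ explicitly. If the interaction graph has $m$ connected components (clusters) $C_1,\ldots,C_m$ with $|C_\ell|=c_\ell$, then $\ker L$ has dimension $m$ and is spanned by the (orthogonal) indicator vectors $\mathbf{1}_{C_\ell}$, normalized to unit length as $\mathbf{1}_{C_\ell}/\sqrt{c_\ell}$. The columns of $V_0$ can therefore be chosen to be these normalized indicators, so
$$V_0V_0^T=\sum_{\ell=1}^m\frac{1}{c_\ell}\mathbf{1}_{C_\ell}\mathbf{1}_{C_\ell}^T.$$
This representation is basis-independent because the projector onto $\ker L$ is unique, so the choice of $V_0$ does not matter.

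I would then compute the $i$-th entry of $V_0V_0^Tx$ directly. If agent $i$ belongs to cluster $C_{\ell(i)}$, only the term $\ell=\ell(i)$ contributes, yielding
$$\bigl(V_0V_0^Tx\bigr)_i=\frac{1}{c_{\ell(i)}}\sum_{j\in C_{\ell(i)}}x_j=\bar x_i,$$
by definition of the cluster average. Substituting back gives $\mathrm{e}(x)_i=x_i-\bar x_i$, and plugging into $U(x)=\frac{1}{n}\mathrm{e}(x)^T\mathrm{e}(x)$ yields the claimed formula.

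There is no real obstacle here: the only step that requires care is the identification of $\ker L$ with the span of the cluster indicators, which is standard for Laplacians of undirected graphs (as arises from the symmetric HK interactions). Everything else is a linear-algebraic rewriting.
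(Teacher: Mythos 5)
Your proof is correct and follows essentially the same route as the paper: both rest on the orthogonality identities $V_-V_-^T=I-V_0V_0^T$, $V_-^TV_-=I$, and on the fact that $V_0V_0^T$ is the block matrix that averages over clusters (the paper's $\Gamma$). The differences are minor — you verify the componentwise identity $\mathrm{e}(x)_i=x_i-\bar x_i$ directly instead of expanding the quadratic form $x^T(I-\Gamma)x$ and completing the square, and you explicitly justify the block form of the projector via $\ker L=\mathrm{span}\{\mathbf{1}_{C_1},\ldots,\mathbf{1}_{C_m}\}$ together with the basis-independence of the orthogonal projector, a step the paper asserts without proof.
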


\begin{proof}
By definition:
$$
U(x)=\frac{1}{n}\prt{V_-V_-^Tx}^T\prt{{V_-V_-^Tx}}
=\frac{1}{n}\prt{x^TV_-V_-^TV_-V_-^Tx}.
$$
We use the properties $V_-V_-^T=I_n-V_0V_0^T$ and $V_-^TV_-=I_{n-m}$ to obtain:
$$
U(x)=\frac{1}{n}x^TV_-I_{n-m}V_-^Tx
    =\frac{1}{n}\prt{\sum_{i=1}^nx_i^2-x^T\Gamma x},
$$
where $\Gamma$ is a block matrix given by:
$$
\Gamma=\begin{bmatrix}
\Gamma_1 & 0 & \cdots & 0\\
0 & \Gamma_2 & \cdots & 0\\
\vdots & \vdots & \ddots & \vdots\\
0 & 0 & \cdots & \Gamma_m
\end{bmatrix},
$$
and each block is a matrix of ones multiplied by $\frac{1}{m_i}$ where $m_i$ is the size of the block (cluster). Then, we have:
\begin{align*}
    U(x)&=\frac{1}{n}\prt{\sum_{i=1}^n x_i^2-\sum_{j=1}^m m_j\bar x_j^2}\\
    &=\frac{1}{n}\prt{\sum_{i=1}^n x_i^2+\sum_{j=1}^m (m_j\bar x_j^2-2m_j\bar x_j^2)}\\
    &=\frac{1}{n}\prt{\sum_{i=1}^n x_i^2+\sum_{i=1}^n \bar x_i^2-2\sum_{j=1}^m \bar x_j(m_j\bar x_j)}\\
    &=\frac{1}{n}\prt{\sum_{i=1}^n x_i^2+\sum_{i=1}^n \bar x_i^2-2 \sum_{i=1}^n\bar x_i x_i}.
\end{align*}
\end{proof}
Based on this result, 
we can observe that the function $U(x)$ is a variant of the classical Lyapunov function $U_0(x)$ in which the average value is computed only among the agents belonging to the clusters of the system.

When $t\to\infty$, the graph will be partitioned into clusters whose nodes satisfy the condition $\vert x_i(t)-x_j(t) \vert<1$. Since Theorem 2 in \cite{blondel2010continuous} implies that the agents in each cluster of the graph will reach consensus, we have that the value of the mean $\bar x_i$ will coincide with the value of all the agents in the cluster such that:
$$
\lim_{t\to\infty} U(x(t))=0.
$$

In the next proposition we provide an additional proof of the convergence of $U(x(t))$ based on its monotonicity in time.

\begin{proposition}\label{prop:asymptotics_U}
The function $U(x(t))$ along the trajectories $x(t)$ of dynamics \eqref{eq:Krause_dynamics} is non-increasing for all $t$ and
$$
\lim_{t\to\infty}U(x(t))=0.
$$
\end{proposition}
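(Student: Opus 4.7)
The plan is to establish the monotonicity of $U(x(t))$ by a bookkeeping argument on the intervals between topology changes and at the instants at which the cluster partition is modified, and then to derive the limit by combining monotonicity with Lemma~1.

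Between topology changes, the interaction graph, and hence the partition of $\I$ into clusters $\{C_1,\ldots,C_m\}$, is constant. Each cluster $C_\ell$ is a connected component of the graph, so the antisymmetry of the HK rule gives $\sum_{i\in C_\ell}\dot x_i(t)=0$, and therefore the cluster averages $\bar x_i$ appearing in \eqref{eq:alternative_U} are conserved on such an interval. Setting $y_i:=x_i-\bar x_i$ and using $x_j-x_i=y_j-y_i$ whenever $j\sim i$, a direct calculation yields
\[
\dfrac{d}{dt}U(x(t))=-\dfrac{2}{n}\,y(t)^{T}L_{t}\,y(t)\le 0,
\]
where $L_t$ is the Laplacian of the interaction graph at time $t$.

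At an instant $t$ where the topology changes, two cases arise. If the cluster partition is unchanged, $U$ depends continuously on $x$ for that fixed partition and so $U(x(t^+))=U(x(t^-))$. Otherwise, a cluster $C$ must split into $C_1,C_2$: this is the only remaining possibility since two clusters separated by a gap of at least $1$ can never merge, because within a cluster the maximum has non-positive derivative and the minimum non-negative derivative, so the inter-cluster gap is non-decreasing. At a split, the parallel-axis identity
\[
\sum_{i\in C}(x_i-\bar c)^2=\sum_{i\in C_1}(x_i-\bar c_1)^2+\sum_{i\in C_2}(x_i-\bar c_2)^2+|C_1|(\bar c_1-\bar c)^2+|C_2|(\bar c_2-\bar c)^2
\]
immediately yields $U(x(t^+))\le U(x(t^-))$. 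Combined with the interval analysis, this will show that $U(x(t))$ is non-increasing on $[t_0,\infty)$.

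For the limit, the number of clusters can only increase at splits and is bounded by $n$, so at most $n-1$ splits occur and the partition stabilizes after a finite time $t^\star$. By Lemma~1, $x(t)\to x^{*}\in F$; inside each stabilized cluster all agents converge to their (conserved) cluster average, hence $U(x^{*})=0$ and therefore $\lim_{t\to\infty}U(x(t))=0$. The main difficulty I expect is the bookkeeping at the discontinuities --- in particular, ruling out cluster mergers and establishing the downward jump at splits; once these two facts are in hand, the rest reduces to the standard Laplacian quadratic form plus the parallel-axis identity.
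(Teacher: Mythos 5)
Your proof is correct and follows the paper's overall skeleton --- non-increase between switching times, equality when the partition is unchanged, ruling out mergers, and a downward jump at splits --- but several key ingredients are genuinely different, and it is worth comparing them. For the jump at a split, the paper invokes the Cauchy--Schwarz inequality in Engel form (Lemma~\ref{prop:partial_means}) to obtain \eqref{eq:ineq_clusters}, whereas you use the parallel-axis identity; the two are equivalent here (your deficit terms $|C_1|(\bar c_1-\bar c)^2+|C_2|(\bar c_2-\bar c)^2$ are exactly the slack in the paper's inequality), and your identity is slightly more informative since it gives the exact size of the downward jump, not just its sign. One small caveat: the paper treats a split into $k$ parts and simultaneous splits of several clusters, while you state only a binary split; this is harmless since the identity iterates (or generalizes directly to $k$ parts), but you should say so. For the no-merger fact the paper cites \cite{hendrickx2008graphs} (order preservation and persistence of gaps $\ge 1$), while you give a self-contained extremal argument --- the maximum of a component has non-positive derivative and the minimum non-negative, because agents in distinct components have no neighbors across the gap --- which is valid and makes the proof more self-contained. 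Between switching times the paper simply observes that $U$ restricts to per-cluster variances, which are non-increasing by \cite{blondel2010continuous}; your explicit computation $\frac{d}{dt}U(x(t))=-\frac{2}{n}y(t)^TL_ty(t)\le 0$, using conservation of the cluster averages, proves the same fact directly and is consistent with the derivative formula $\frac{d}{dt}U(x(t))=-nV(x(t))$ that the paper records after the proposition. Finally, for the limit, the paper obtains $\lim_{t\to\infty}U(x(t))=0$ in the discussion preceding the proposition via Theorem 2 of \cite{blondel2010continuous} (within-cluster consensus of the limit); your route adds the correct observation that only finitely many splits can occur so the partition stabilizes, and then leans on Lemma 1 exactly as the paper does --- in both arguments the within-cluster consensus of the limit (in particular, excluding the degenerate case of sub-groups converging to distance exactly $1$ while an edge persists) is inherited from the cited convergence result rather than reproved, so your proof stands on the same footing as the paper's there.
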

Before presenting the proof of Proposition~\ref{prop:asymptotics_U}, we introduce the following lemma.
\begin{lemma}[\cite{cvetkovski2012inequalities}]\label{prop:partial_means}
Let $a_1, a_2,\ldots,a_n$; $b_1,b_2,\ldots,b_n$ be real numbers such that $b_1,b_2,\ldots,b_n>0$. Then
$$
\dfrac{a_1^2}{b_1}+\dfrac{a_2^2}{b_2}+\cdots+\dfrac{a_n^2}{b_n}\ge \dfrac{(a_1+a_2+\cdots+a_n)^2}{b_1+b_2+\cdots+b_n},
$$
with equality if and only if
$$
\dfrac{a_1}{b_1}=\dfrac{a_2}{b_2}=\cdots=\dfrac{a_n}{b_n}.
$$
\end{lemma}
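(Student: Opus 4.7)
The statement is the classical Cauchy–Schwarz inequality in Engel form (sometimes called Titu's lemma). The plan is to derive it as a direct consequence of the ordinary Cauchy–Schwarz inequality and then to trace the equality condition.

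The main step is to apply Cauchy–Schwarz,
$$
\left(\sum_{i=1}^n u_i v_i\right)^2 \le \left(\sum_{i=1}^n u_i^2\right)\left(\sum_{i=1}^n v_i^2\right),
$$
with the specific choice $u_i = a_i/\sqrt{b_i}$ and $v_i = \sqrt{b_i}$, which is legitimate since $b_i>0$. Then $u_i v_i = a_i$, $u_i^2 = a_i^2/b_i$, and $v_i^2 = b_i$, so the inequality becomes
$$
\left(\sum_{i=1}^n a_i\right)^2 \le \left(\sum_{i=1}^n \frac{a_i^2}{b_i}\right)\left(\sum_{i=1}^n b_i\right),
$$
and dividing by the strictly positive quantity $\sum_i b_i$ yields exactly the claimed inequality. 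This is essentially a one-line derivation; the only thing to be slightly careful about is that the $a_i$ may be of either sign, but the application of Cauchy–Schwarz does not require positivity of the $u_i$.

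For the equality case, I would invoke the standard equality condition for Cauchy–Schwarz: equality holds iff the vectors $(u_1,\dots,u_n)$ and $(v_1,\dots,v_n)$ are proportional, i.e.\ there exists $\lambda\in\R$ with $u_i = \lambda v_i$ for all $i$. Translating back, $a_i/\sqrt{b_i} = \lambda\sqrt{b_i}$, i.e.\ $a_i/b_i = \lambda$ for all $i$, which is precisely the condition $a_1/b_1 = \cdots = a_n/b_n$.

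A clean alternative, which I would mention briefly in case one prefers to avoid citing Cauchy–Schwarz, is to prove the $n=2$ case directly by clearing denominators: the inequality
$\frac{a_1^2}{b_1}+\frac{a_2^2}{b_2}\ge\frac{(a_1+a_2)^2}{b_1+b_2}$
is equivalent, after multiplying by $b_1 b_2 (b_1+b_2)>0$, to $(a_1 b_2 - a_2 b_1)^2 \ge 0$, with equality iff $a_1 b_2 = a_2 b_1$. Then induction on $n$ combines the inductive hypothesis on the first $n-1$ terms with the $n=2$ case applied to the pair $\bigl((a_1+\cdots+a_{n-1})^2/(b_1+\cdots+b_{n-1}),\ a_n^2/b_n\bigr)$. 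There is no real obstacle here; the only subtlety is bookkeeping the equality condition through the induction, which propagates the proportionality $a_i/b_i = \text{const}$ at each step.
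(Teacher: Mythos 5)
The paper does not prove this lemma at all---it is imported by citation from \cite{cvetkovski2012inequalities} (where it appears as the Cauchy--Schwarz inequality in Engel form, i.e.\ Titu's lemma), so there is no in-paper proof to compare against. Your argument is correct and is the standard derivation: the substitution $u_i=a_i/\sqrt{b_i}$, $v_i=\sqrt{b_i}$ is legitimate since $b_i>0$, the equality case is handled properly (note that $v\neq 0$ because all $b_i>0$, so linear dependence can indeed be written as $u=\lambda v$), and your alternative induction on $n$ via the two-term case $(a_1b_2-a_2b_1)^2\ge 0$ is also sound, including the propagation of the equality condition.
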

\begin{proof}[Proof of Proposition~\ref{prop:asymptotics_U}]
Let us consider the function $U(x(t))$ between two possible switching times of dynamics \eqref{eq:Krause_dynamics}. In this case, the function corresponds to the variance of each cluster of the graph, which is non-increasing. Now, we examine the behavior of the function during the switching times. If the clusters remain the same during a switching time, then we have:
$$
U(x(t^+))=U(x(t^-)).
$$
For dynamics \eqref{eq:Krause_dynamics}, the order between agent opinions is preserved and if at some time instant the distance between two consecutive agent opinions $x_i$ and $x_{i+1}$ is larger than or equal to 1, it remains so forever \cite{hendrickx2008graphs}. This implies that two different clusters cannot merge.

Next, we analyze the case when a cluster with $m$ agents is partitioned into $k$ clusters during a switching time. Then, before the switching event, we get:
$$
U(x(t^-))=U^{0}+\sum_{i=1}^m\left(x_i-\frac{1}{m}\sum_{j=1}^m x_j\right)^2=U^{0}+\sum_{i=1}^m x_i^2-\frac{1}{m}\left(\sum_{i=1}^m x_i\right)^2,
$$
where $U^{0}$ corresponds to the clusters without modification. After the switching event, we obtain:
\begin{equation*}
    U(x(t^+))=U^{0}+
    \sum_{i=1}^{m} x_i^2-\left(\!\frac{1}{m_1}\!\left(\sum_{i=1}^{m_1} x_i\!\right)^2\!+\!\cdots\!+\!\frac{1}{m-m_{k-1}}\left(\sum_{i=m_{k-1}+1}^{m} x_i\!\right)^2\right).
\end{equation*}
Then, by using Lemma~\ref{prop:partial_means}, it holds:
\begin{equation}\label{eq:ineq_clusters}
\frac{1}{m}\left(\sum_{i=1}^m x_i\right)^{2}\le \frac{1}{m_1}\left(\sum_{i=1}^{m_1} x_i\right)^{2}+\cdots+\frac{1}{m-m_{k-1}}\left(\sum_{i=m_{k-1}+1}^{m} x_i\right)^{2},    
\end{equation}
such that:
\begin{equation}\label{eq:behavior_U}
U(x(t^-))\ge U(x(t^+)).    
\end{equation}
Since clusters are independent, if more than one cluster is partitioned during the same switching time, the inequality \eqref{eq:ineq_clusters} is valid for each cluster and \eqref{eq:behavior_U} is the only possible behavior for $U(x(t))$ during switching times. 
\end{proof}

An alternative proof of the limit of $U(x(t))$ can be given by using its derivative between switching times, which is negative if $x(t)\notin F$ and zero otherwise:
$$
\frac{d}{dt}U(x(t))=-nV(x(t))=-\frac{1}{n}\sum_{i,j:\vert x_i-x_j\vert<1}(x_i-x_j)^2.
$$

\begin{remark}(Edges in a cluster)
Unlike the function $V(x(t))$, the addition and removal of edges between agents in the same cluster do not modify the value of the function $U(x(t))$.
\end{remark}

\subsection{Comparison between functions}
To summarize the arguments above, we report in Table~\ref{tab:table1} the characteristics of the functions introduced in this work. Since $U_0$, $V_0$ and $T_0$ are equivalent as per Proposition~\ref{prop:equivalence_functions}, we only present the characteristics of $U_0$. In the perspective of studying the Open HK dynamics, the ideal Lyapunov functions should have the properties of being continuous, non-increasing, and converging to zero. Inspecting the table highlights that no available function satisfy this combination of properties.

\begin{table}[]
\centering
 \begin{tabular}{| c | c | c | c |}
 \hline
 \textbf{Function} & \textbf{Limit value} & \makecell{\textbf{Continuity} \\ \textbf{in  time}} & \makecell{\textbf{Monotonicity} \\ \textbf{in time}} \\ 
 \hline
 $U_0(x(t))$ &  $\ne 0$ & Continuous & Non-increasing \\ 
 \hline
 $W_0(x(t))$ & $\ne 0$ & Continuous & Non-increasing \\
 \hline
 $V(x(t))$ & =0 & Discontinuous & Can increase \\
 \hline
 $U(x(t))$ & =0 & Discontinuous & Non-increasing\\ 
 \hline
\end{tabular}
\caption{Characteristics of Lyapunov functions for closed HK dynamics.}
\label{tab:table1}
\end{table}

\section{Lyapunov functions in open HK dynamics}\label{sect:open}

In the open system, the dynamics in continuous time of the agents is characterized by \eqref{eq:Caratheodory}, and the additional changes are due to \emph{arrivals} and \emph{departures}, that generate a time-varying set of agents $\I(t)$.
\begin{definition}[Departure]
We say that an agent $j\in\I(t^-)$ leaves the system at time $t$ if:
$$
\I(t^+)=\I(t^-)\setminus \{j\},
$$
where $\I(t^-)$ is the set of agents before the departure and $\I(t^+)$ is the set of agents after the departure of agent $j$. Thus, $\abs{\I(t^+)}=\abs{\I(t^-)}-1$.
\end{definition}

\begin{definition}[Arrival]
We say that an agent $j$ \footnote{The label of agent $j$ is different from the labels of all the agents that have interacted in the system from time $t_0$ until time $t$.} joins the system at time $t$ if:
$$
\I(t^+)=\I(t^-)\cup \{j\},
$$
where $\I(t^-)$ is the set of agents before the arrival and $\I(t^+)$ is the set of agents after the arrival of agent $j$. Thus, $\abs{\I(t^+)}=\abs{\I(t^-)}+1$.
\end{definition}

In this open system, the solution of each agent $x_i$ satisfies 
\begin{equation}\label{eq:open_Caratheodory}
    x_i(t)=x_i(t_{a_i})+\int_{t_{a_i}}^t \sum_{j\in \I(\tau):\vert x_i(\tau)-x_j(\tau)\vert<1} (x_j(\tau)-x_i(\tau))  \; d\tau,
\end{equation}
for all $t\in\brk{t_{a_i},t_{d_i}}$, where $t_{a_i}$ and $t_{d_i}$ are the time instants of the arrival and departure, respectively, of agent $i$. If agent $i$ was already present at time $t_0$, we consider $t_{a_i}=t_0$. If agent $i$ never leaves the system, we consider that the solution \eqref{eq:open_Caratheodory} is valid for all $t\ge t_{a_i}$.

To analyze this open system, we consider a stochastic setting where the time instants corresponding to arrivals and departures are determined by homogeneous Poisson processes. Following an approach similar to \cite{monnoyer2020open}, we make the following assumptions about the occurrence of departures and arrivals.

\begin{assumption}[Departure process]\label{ass:departure_poisson_process}
The departure instant of an agent $j$ is determined by a homogeneous Poisson process $N_t^{(j)}$ with rate $\lambda_{d}>0$ associated with the agent. Thus, all the departure instants in the system are determined by a Poisson process $N_t^{(D)}$ with rate $\lambda_D(t)=\lambda_dn(t)$ \footnote{The global Poisson process $N_t^{(D)}$ associated with all the arrivals is not homogeneous since the rate $\lambda_D$ is time-varying.}.  
\end{assumption}

\begin{assumption}[Arrival process]\label{ass:arrival_poisson_process}
The arrival instants are determined by a homogeneous Poisson process $N_t^{(A)}$ with rate $\lambda_A>0$. 
During an arrival, an agent $j$ joins the system with a state $x_{j}$ determined by a random variable $\Theta$, which takes values according to a continuous distribution with  support in the interval $[a,b]$, with mean $m$ and variance $\sigma^2$.    
\end{assumption}

In OMAS, the size of the systems is time-varying and even if several Lyapunov functions can be used to prove stability of the HK dynamics, some of them may not be adequate to evaluate the impact of arrivals and departures. It is obvious that in OMAS, the Lyapunov functions cannot be continuous since when an agent joins or leaves the network, the energy of the system is abruptly modified, generating a jump (discontinuity). 

\subsection{Simulations of Open HK dynamics}
We perform numerical simulations of an open HK dynamics to evaluate the behavior of the Lyapunov functions subject to arrivals and departures.
We consider a system composed by $n_0=10$ agents whose initial values are drawn from a uniform distribution $\mathcal{U}[0,6]$. For the closed system, the number of agents is constant (i.e., $n(t)=n_0$ for all $t$) and we compute the expected values of the Lyapunov functions $U_0(x(t))$, $U(x(t))$ and $V(x(t))$ considering 10000 realizations of the process. Then, we perform the simulations of an open HK dynamics with $\lambda_A=5$ and two different values of $\lambda_d$. From the theory of birth-death processes, the asymptotic value of the expected number of agents is given by \cite{monnoyer2020open}:
$$
\lim_{t\to\infty} \Ep{n(t)}=\frac{\lambda_A}{\lambda_d}.
$$
In the first scenario, we choose $\lambda_d=0.4$, which gives us $\lim_{t\to\infty} \Ep{n(t)}=12.5$, such that asymptotically, the expected number of agents is greater than the initial number $n_0$. In the second scenario, we use $\lambda_d=0.62$, which corresponds to $\lim_{t\to\infty} \Ep{n(t)}=8.06$, such that we expect a smaller average number of agents in the long run. The expectation of the Lyapunov functions are again computed over 10000 realizations of the process.

\begin{figure}
\centering
\includegraphics[width=0.7\linewidth]{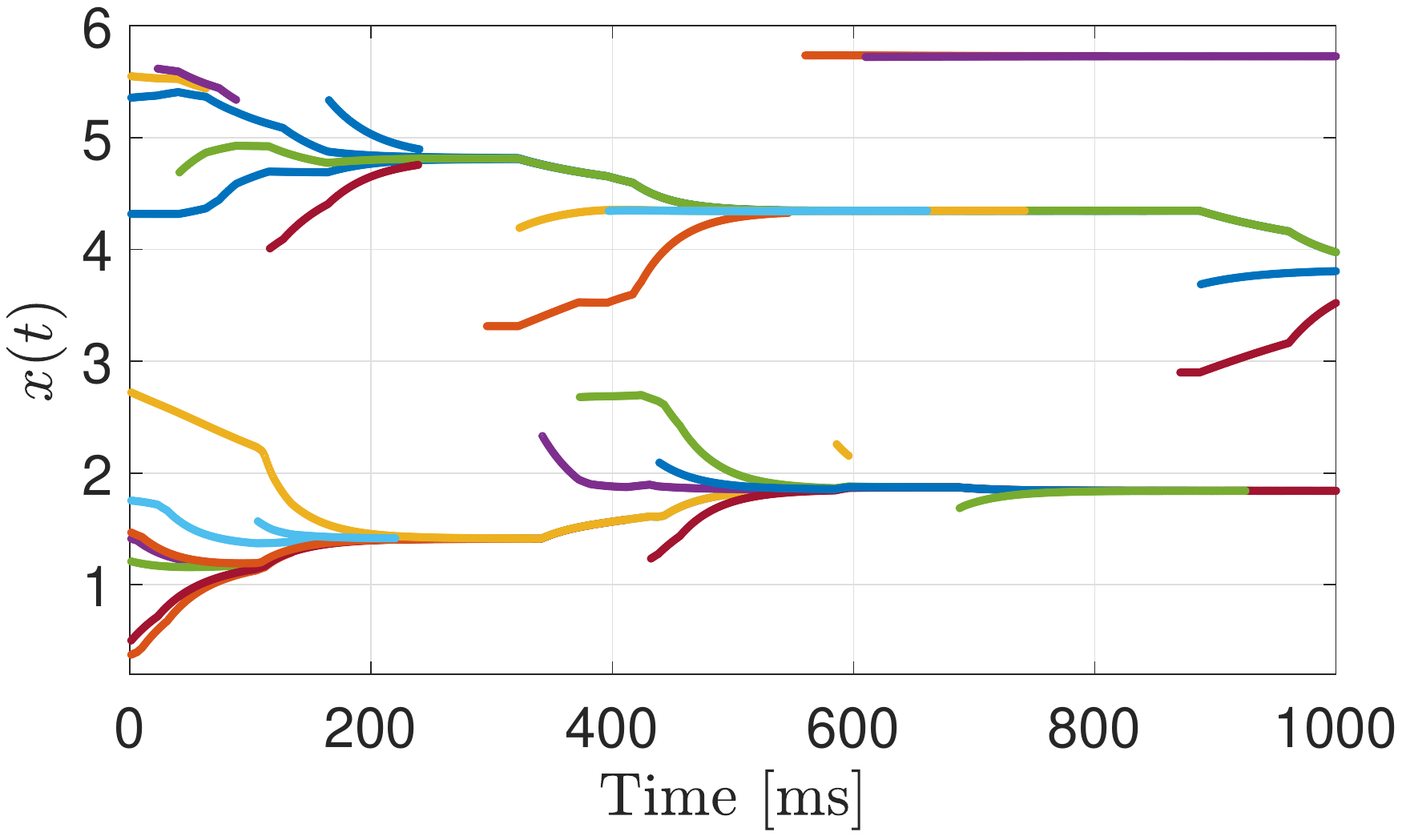}
\caption{Evolution of the opinions for a realization of an open HK dynamics with $n_0=10$ agents and joining and leaving agents following homogeneous Poisson processes with $\lambda_A=5$ and $\lambda_d=0.4$.}\label{fig:x_t_poisson}
\end{figure}

\begin{figure}
\centering
\subfigure[$\Ep{U_0(x(t))}$]
{ \includegraphics[width=0.8\linewidth]{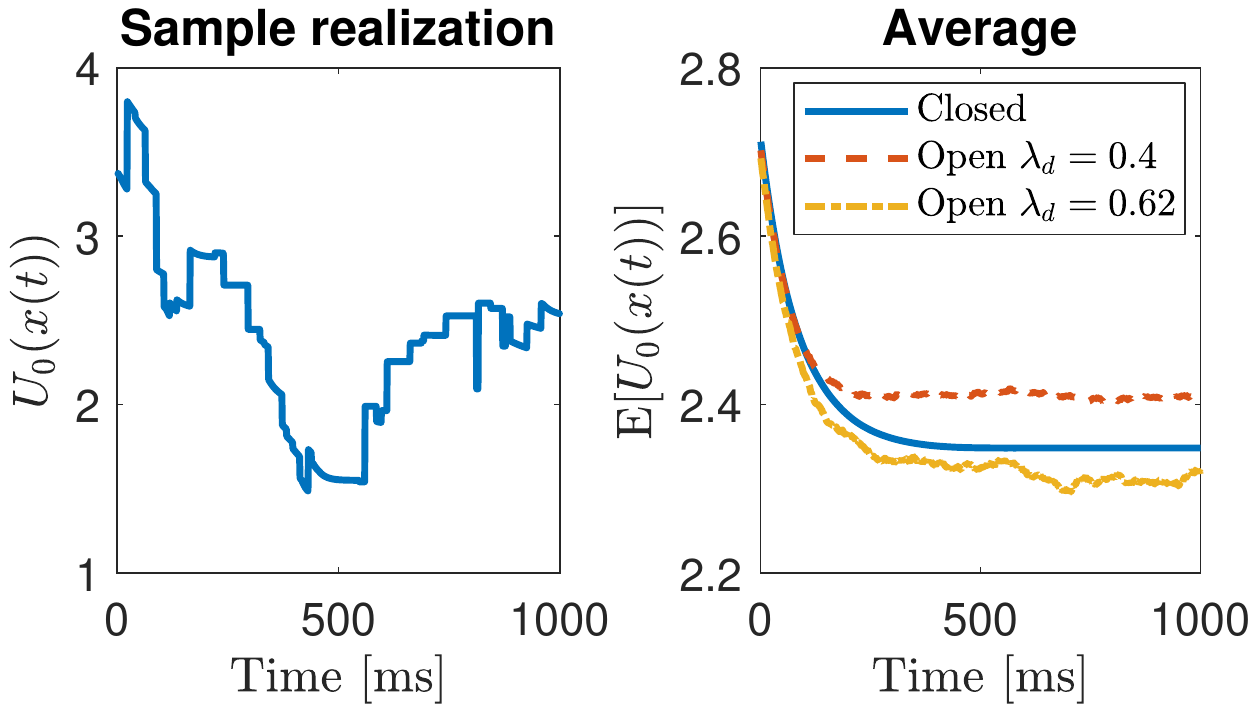}\label{fig_U_0_x_expectation}
}
\subfigure[$\Ep{U(x(t))}$]
{ \includegraphics[width=0.8\linewidth]{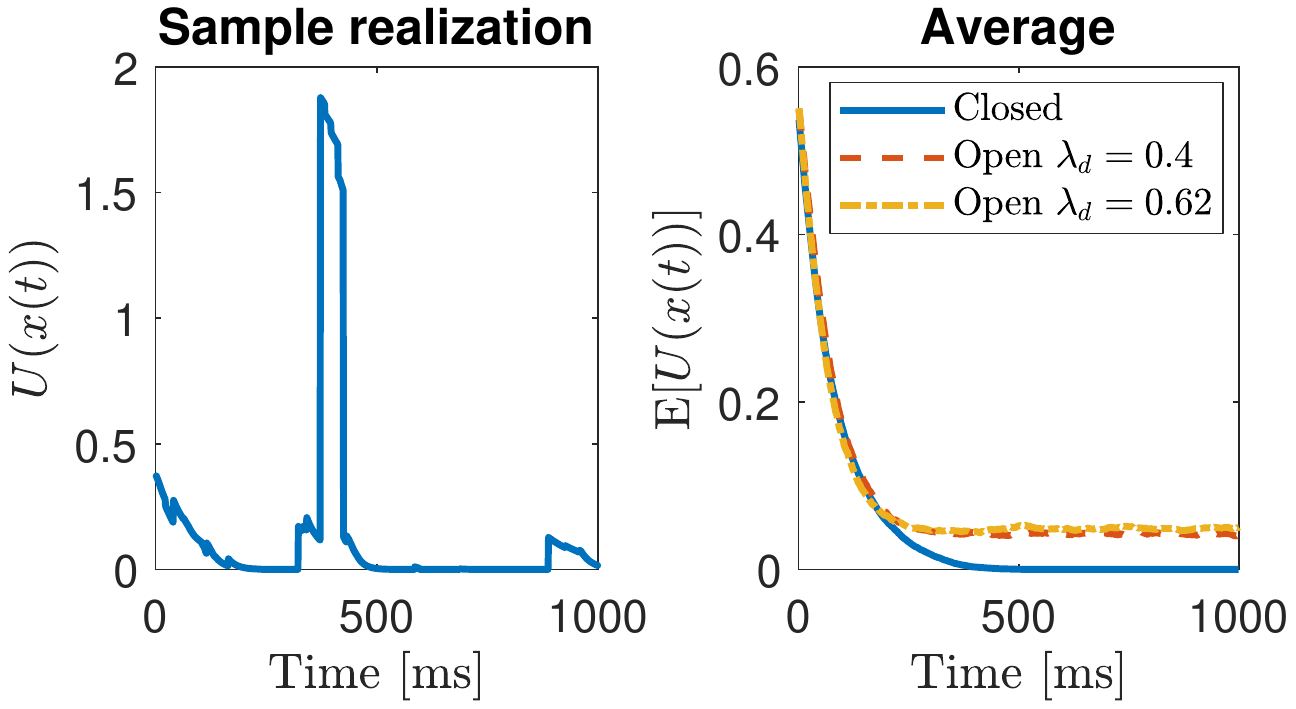}\label{fig_V_x_expectation}
}
\subfigure[$\Ep{V(x(t))}$]
{ \includegraphics[width=0.8\linewidth]{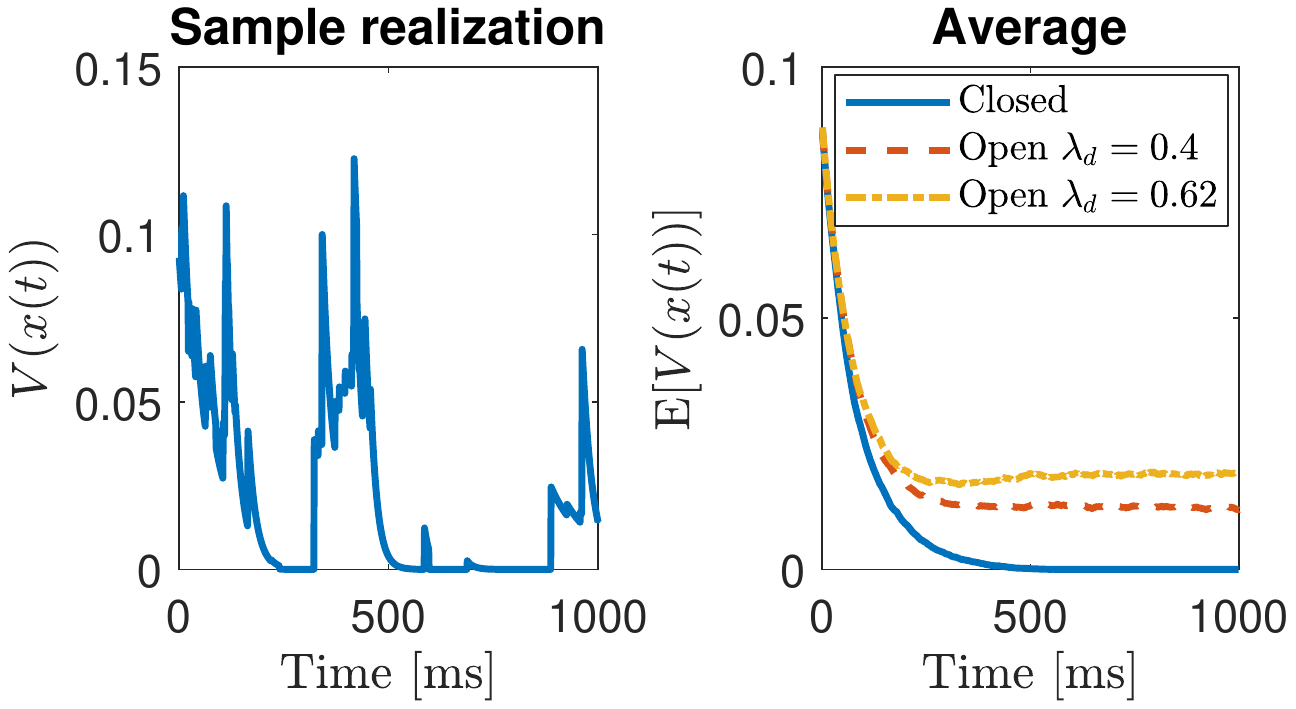}\label{fig_U_x_expectation}
}
\caption{Evolution of the Lyapunov functions for closed and open HK dynamics. On the left, sample realizations of the open dynamics. On the right, averages of 10000 simulations. }\label{fig:Lyapunov_functions}
\end{figure}

Fig.~\ref{fig:x_t_poisson} presents the trajectories of the states for one realization in the first scenario with $\lambda_d=0.4$, where we can observe the instants of arrivals and departures of agents corresponding to the appearance of new trajectories and disappearance of current trajectories respectively. For this particular realization, 19 new agents join the system and 20 departures take place, such that the number of agents at the end of the time interval considered for the simulation is 9. The initial number of clusters is 2 but new clusters appear and merge during the evolution of the dynamics, generating 3 clusters at the end. It is clear that arrivals and departures modify the number of clusters in time depending on the interval $[a,b]$ considered for the assignment of the initial states of the interacting agents.

In Fig.~\ref{fig:Lyapunov_functions}, we present the simulations of $\Ep{U_0(x(t))}$, $\Ep{U(x(t))}$ and \linebreak $\Ep{V(x(t))}$ in a closed and open scenario. In the left plots, we present the functions corresponding to the realization of the process in Fig.~\ref{fig:x_t_poisson}, that is with $\lambda_d=0.4$. In the case of $U_0(x(t))$, the discontinuities of the function are due only to arrivals and departures, while in the case of $U(x(t))$ and $V(x(t))$, the discontinuities are due to both the openness and the changes in the clusters due to the HK dynamics. 
In the right plots, the solid blue line corresponds to the simulation of the expectation of the Lyapunov functions in the closed system, the dashed red line corresponds to the simulation of the open system with $\lambda_d=0.4$ and the dash-dotted yellow line corresponds to the simulation with $\lambda_d=0.62$. 

Several observations can be made from these simulations.
Most importantly, the Lyapunov functions based on local disagreement, $U(x(t))$ and $V(x(t))$, have non-zero asymptotic values due to the arrivals and departures in the open scenarios, in contrast to the closed scenario.  Hence, the derivation of upper bounds on their asymptotic values can be useful to evaluate the impact of arrivals and departures. 

In the case of $U_0(x(t))$, instead, such an upper bound would not provide useful information, because a positive value of the function may correspond to a closed or an open system. 
Perhaps surprisingly, the asymptotic value of the Lyapunov function based on global disagreement $U_0(x(t))$, can be lower for open systems. This behavior is due to the fact that the arrival of agents may help to join different clusters of agents, decreasing the value of the variance $U_0(x(t))$ in the system. This fact is remarkable since usually the metrics associated with a system (e.g., order parameters, Lyapunov functions, mean squared errors) that are used to evaluate its performance, exhibit a worse behavior for a time-varying set of agents (see for instance \cite{vizuete2022gradient}).

\section{Conclusions and future work}\label{sect:outro}

In this paper, we introduced the problem of the HK dynamics in open multi-agent systems where agents may join and leave the network during the interactions. We formulated the problem using a stochastic setting where the arrivals and departures of agents are determined by Poisson processes. We considered several Lyapunov functions, based either on global or local disagreement, as potential tools for the analysis of this system. Finally, we provided numerical simulations for two different scenarios to show the relevance of the Lyapunov functions based on local disagreement to measure the impact of arrivals and departures in the HK dynamics.
The simulations corroborate the fact that functions based on local disagreement are promising tools to study bounded-confidence opinion dynamics in open systems.

The natural continuation of this work is the full analysis of the open HK system and, in particular, studying the asymptotic value of the Lyapunov functions through the derivation of appropriate upper bounds, depending on the arrival and departure processes. A further extension would be to analyze the performance of a social HK model where interactions are also restricted by a graph topology \cite{parasnis2022social}, extended to the case of open systems.

\bibliographystyle{IEEEtran}
\bibliography{arxiv}

\end{document}